\documentclass{article}
\usepackage{amsmath, amssymb, amsthm}
\usepackage{hyperref}
\usepackage{enumitem}
\usepackage{cite}
\usepackage{empheq}
\usepackage{tikz}
\usepackage{subcaption}
\usepackage{comment}
\usepackage{tikz-cd}
\usetikzlibrary{arrows.meta,calc,positioning,decorations.markings,shapes.geometric}
\newtheorem{theorem}{Theorem}

\newcommand{\G}[2]{\Gamma(#1\leftarrow #2)}
\title{On the emergence of quantum mechanics from stochastic processes}
\author{Jason Doukas}
\date{\today}

\begin{document}

\maketitle
\begin{abstract}
The stochastic--quantum correspondence reinterprets quantum dynamics as arising from an underlying stochastic process on a configuration space. We generalize the correspondence by lifting an arbitrary stochastic kernel \(\Gamma\) in finite dimension to a map \(\phi\) on \(B(\mathcal H)\), formulating the associated lift-compatibility relation, and giving an explicit dictionary between \(\Gamma\) and CPTP (Kraus) maps. We isolate Chapman--Kolmogorov divisibility of the lifted family as the decisive additional constraint: when a CK-consistent CPTP family exists, the lift admits a Lindblad master equation form.
In this picture, off-diagonal (phase) degrees of freedom act as a compressed carrier of history dependence not fixed by transition kernels alone; conversely, the apparent emergence of quantum phase information from a phase-blind stochastic description is explained as a memory effect. Finally, we state and prove a divisibility criterion for the underlying stochastic kernels, expressed as a condition involving divisibility of the lifted map together with a diagonality requirement on the density operator.

\end{abstract}

\section{Introduction}

Quantum mechanics is a probabilistic theory that differs from classical stochastic theories in a precise way: while stochastic dynamics evolve probabilities directly, quantum time evolution is governed by unitary transformations acting on complex amplitudes with observable probabilities obtained only after discarding phase information.

It is precisely this phase-sensitive structure that gives rise to the distinctive quantum features of superposition, interference, and entanglement, which are traditionally regarded as setting quantum mechanics apart from classical stochastic processes. For a long time, this distinction was viewed as marking a fundamental divide between the two frameworks. 

More recently, however, a proposal known as the \textit{stochastic–quantum correspondence} (see Barandes’ original formulation and subsequent developments \cite{Barandes2025,Barandes2026Theorem,Barandes2024CausalLocal,Barandes2025Indivisible, BarandesHasanKagan2025, Pimenta2025Divisible,Calvo2026On}) has challenged this view by suggesting a striking alternative: that quantum theory may itself be understood as a particular class of stochastic process, rather than as a fundamentally different dynamical framework.

The stochastic–quantum correspondence is formulated within a highly general setting, assuming only a configuration space together with probability distributions at given times and conditional probabilities governing their evolution. 

An important aspect emphasized in the original formulation is that no Markovian assumption is imposed, in contrast to what is commonly assumed in stochastic modeling (and, more broadly, physics in general). As a result, the dynamics may instead be \textit{indivisible}, in the sense that the evolution cannot be decomposed into a sequence of independent local transition steps. For this reason, the novel behavior captured by the correspondence arises not from the introduction of additional constraints, but rather from the relaxation of restrictive assumptions typically imposed on stochastic processes.

Nevertheless, the original construction does assume additional structure in the class of stochastic processes it treats. One aim of the present work is to make that structure explicit and to situate it within the broader landscape of general stochastic processes.

Importantly, this structural assumption does not diminish what the stochastic--quantum correspondence achieves. In its textbook formulation, quantum theory is presented axiomatically, with measurement and the role of an observer placed at the center of the formalism. In its most familiar form, the Copenhagen interpretation, the observer plays a fundamental role not only in which
observable is measured, but also in the post-measurement state-update (collapse) used to make future predictions.

Numerous approaches have sought to diminish the central role of observers in the dynamical update of quantum theory. These efforts often attempt to derive the Born rule and the collapse postulate from unitary dynamics alone, or by augmenting the standard axioms with additional assumptions.

In a stochastic description, these notions separate cleanly. A stochastic process comes with a fixed underlying configuration space, and its “state” may be identified with the realized configuration (i.e., the value of the trajectory) at a given time. Once an outcome occurs, it is a matter of physical fact irrespective of whether it is observed, and updating to reflect that fact requires no observer-dependent measurement postulate.

While the original work is comprehensive, it leaves several natural questions that we take up here. First, if unitary quantum mechanics is meant to arise from essentially stochastic assumptions, why do ordinary Markovian stochastic processes not arise on the same footing, as “first-class” objects within the correspondence, rather than appearing only indirectly via limiting procedures or coarse-grained reductions? Second, what intrinsically distinguishes those stochastic processes that admit a quantum lift, and where exactly do they sit within the broader landscape of stochastic dynamics? Third, how can the rich phase-dependent structures of quantum mechanics be reconciled with a phase-blind probabilistic description in which the transition kernels contain no explicit phase information? Fourth, since operational coupling to measurement devices cannot by itself replace the  state update in any probabilistic theory, what is the precise relationship between conditional updating of the trajectory, the interventions of embedded observers, and the emergence of division events during a measurement? Finally, stochastic processes are specified by higher-order conditional laws on path space; why, then, are rooted (two-time) transition kernels nonetheless sufficient to capture most empirical quantum phenomena?

The aim of this work is to clarify these questions. To that end, we generalize the correspondence beyond its original setting, extending it to general finite stochastic processes and making explicit the additional constraints under which a stochastic description admits a consistent composable quantum lift.

The paper is organized as follows. Section~2 fixes basic conventions for finite-state stochastic processes and recalls the notions of Markovianity, composition, and divisibility. Section~3 analyzes the short-time structure of the original correspondence and shows that it imposes a nontrivial restriction on how transition probabilities can depart from the identity, motivating a broader lifting framework. Section~4 develops this generalization by deriving the compatibility condition for operator lifts and by giving an explicit completely positive, trace-preserving lift for any two-time stochastic kernel. Section~5 clarifies how phase-sensitive effects can arise from phase-blind stochastic data by locating the missing information in higher-order (multi-time) stochastic memory, and explaining how this information can become operational under unitary compositions. Finally, Section~6 disentangles the selective update rule from environment-induced disturbance and states and proves a divisibility criterion for the stochastic process transition kernels.

\section{Stochastic processes}
\label{sec:stochasticprocesses}
The theory of stochastic processes arises in a wide range of applied fields, including physics, finance and biology, most prominently through models based on Markov processes such as Brownian motion or continuous-time Markov chains (CTMC). 

Over time, it has also been developed rigorously as a branch of probability theory. A major milestone in this development was the axiomatization of probability by Kolmogorov in 1933, which placed the theory on a firm measure-theoretic foundation. Because stochastic processes often involve conditioning, limits, or infinite-dimensional structures, their fully rigorous formulation can be technically involved. In the interest of clarity, we present only the relevant definitions while adopting a number of simplifying conventions that suffice for the present purposes.

A classical stochastic process indexed by time \(t \in T\) is a collection of random variables taking values in a configuration space \(X\),
\begin{equation}
	\{X_t : \Omega \rightarrow X\}_{t \in T},
\end{equation}
where \((\Omega,\Sigma,P)\) is a probability space. Here, \(\Omega\) denotes the sample space (which may be identified with a space of paths \(X^T\)), \(\Sigma\) is the \(\sigma\)-algebra of events, and \(P\) is a probability measure. In what follows we restrict to finite configuration spaces, \(X=\{1,\ldots,N\}\). 

By the Kolmogorov extension theorem (under suitable consistency conditions ensuring the existence of an underlying probability measure on path space), a stochastic process is equivalently characterized by its finite-time joint probability distributions.

For any finite collection of times \(t_0, \ldots, t_n \in T\), the joint
distribution
\begin{equation}
	p(x_0,\ldots,x_n) := 	P \bigl(X_{t_0}=x_0,\ldots,X_{t_n}=x_n\bigr),	\qquad x_i \in X,
\end{equation}
assigns probabilities to configurations of the process at those times.

Knowledge of all finite-time joint probability distributions $p(x_0, \ldots,x_n)$ completely specifies the stochastic process, in the sense that it fixes all probabilistic predictions associated with the process.

By the chain rule of probability, the joint distribution may be expanded as
\begin{equation}
	p(x_0,\ldots,x_n)	=	p(x_n \mid x_{n-1},\ldots,x_0)	p(x_{n-1} \mid x_{n-2},\ldots,x_0)	\cdots	p(x_1 \mid x_0)	p(x_0). \label{eq:chainrule}
\end{equation}

In this form, the dynamical structure of the stochastic process becomes explicit and admits a clear interpretation. The probability of a trajectory in path space up to time $t_n$ is updated via conditional probabilities $p(x_n \mid x_{n-1},\ldots,x_0)$ that may, in general, depend on the entire prior history of the path.

A stochastic process is \emph{Markov} if for any finite collection of times \(t_0 < \cdots < t_{n-1} < t_n\):
\begin{equation}
	p(x_n \mid x_{n-1},\ldots,x_0) = p(x_n \mid x_{n-1}),
	\label{eqn:markovprop}
\end{equation}

and hence the dynamical structure of the process becomes:
\begin{equation}
	p(x_0,\ldots,x_n)	=	p(x_n \mid x_{n-1})	p(x_{n-1} \mid x_{n-2})\cdots	p(x_1 \mid x_0)p(x_0).
\end{equation}

Therefore, in a Markov process, all finite-time joint distributions are determined by the initial distribution \(p(x_0)\) together with the family of two-time transition kernels \(p(x_t \mid x_s)\).

Although, in general, a collection of two-time conditional probabilities need not determine a Markov process, these quantities remain fundamental. In particular, for any fixed global process the pairwise conditional probabilities \(p(x_t\mid x_s)\) are well defined whenever \(p(x_s)>0\). They determine the correct one-time marginal at time \(t\) from the global marginal at time \(s\) through\footnote{However, for a non-Markovian process this two-time kernel need not define a history-independent linear propagator on arbitrary conditioned states at time \(s\); that stronger interpretation requires \(s\) to be a division time in the sense clarified in Sec.~\ref{sec:divisionevents}.}
\begin{equation}
p(t)=\G{t}{s}p(s),
\qquad
\G{t}{s}:=\bigl[p(x_t\mid x_s)\bigr].
\label{eqn:twotimetransitionkernel}
\end{equation}

By construction, the transition matrices are column-stochastic; namely, the elements of each column are non-negative real numbers that sum to unity. Let \(\mathbf{1}\) denote the all-ones column vector. Then column-stochasticity is equivalently expressed as
\begin{equation}
	\mathbf{1}^{\mathsf T}\Gamma(t \leftarrow s)  =  \mathbf{1}^{\mathsf T}.
	\label{eqn:colsum1}
\end{equation}
In addition, they satisfy the identity property
\begin{equation}
	\G{s}{s} = \mathbb{I}.
\end{equation}

It will be useful to distinguish the \emph{Chapman–Kolmogorov (CK) property}, an algebraic composition law for a two-parameter family of maps, from the \emph{Markov property}, a conditional-independence condition on a stochastic process.

\paragraph{Chapman--Kolmogorov (CK) property}
Let $\{T_{t,s}\}_{t\ge s}$ be a two-parameter family of transformations we say $\{T_{t,s}\}$ satisfies the
\emph{CK property} if
\begin{equation}
	T_{t,s}=T_{t,u}\circ T_{u,s},
	\qquad \forall  s\le u\le t,
	\label{eq:ck_maps}
\end{equation}
together with the normalization $T_{s,s}=\mathrm{id}$.

If the process is Markov in the sense of Eq.~\eqref{eqn:markovprop}, then the induced kernel family $\{\G{t}{s}\}$ satisfies the CK property\footnote{Markov conditional independence implies the CK identity for the associated two-time kernels. However, the converse need not hold: a kernel family may satisfy CK without arising from a consistent collection of finite-time joint distributions obeying	the Markov conditional-independence property.}:
\begin{equation}
	\G{t}{s}=\G{t}{u} \G{u}{s},
	\qquad \forall  s<u<t.
	\label{eq:ck}
\end{equation}

In general, this composition property need not hold. More broadly, whenever the two-time transition kernel admits a factorization into a product of two column-stochastic matrices of the form
\begin{equation}
	\Gamma(t \leftarrow s)	=	\Gamma(t \leftarrow u) \Gamma(u \leftarrow s),
\end{equation}
the process is said to be \emph{divisible} at the intermediate time \(u\), and \(u\) is called a \emph{division time} and the system is said to have undergone a \emph{division event}. When no such factorization exists, the evolution between \(s\) and \(t\) is said to be \emph{indivisible}.

\section{Short-time structure of the stochastic--quantum correspondence}
\label{sec:shorttime}

The starting point for establishing the correspondence with quantum mechanics is the identification of a linear operator \(\theta\) whose elementwise moduli squares reproduce the transition kernel:
\begin{equation}
	\Gamma_{ij} = \lvert \theta_{ij} \rvert^2,
	\qquad \forall i,j \in X .
\end{equation}
As noted in \cite{Barandes2025}, this identification is a purely mathematical construction that preserves the positivity of the elements of \(\Gamma\). Additional constraints on \(\theta\) ensure that \(\Gamma\) is column-stochastic. The relation may be written compactly using the Schur--Hadamard product:
\begin{equation}
	\G{t}{s}=\theta(t, s)\odot\theta(t, s)^*.
	\label{eq:gammatheta}
\end{equation}

Here $\odot$ denotes entrywise multiplication with respect to the distinguished configuration basis,
\[
(A\odot B)_{ij} := A_{ij}B_{ij},
\]
and ${}^{*}$ denotes entrywise complex conjugation.

However, this seemingly innocuous choice of representation has nontrivial consequences. Assuming the kernel is sufficiently regular in its end-time argument, the local increment $\G{t+dt}{t}$ admits the expansion

\begin{equation}
	\G{t+dt}{t} =	\mathbb{I}	+	R(t) dt	+	\tfrac{1}{2} S(t) dt^2	+	O(dt^3),
	\label{eq:gammaexpansion}
\end{equation}

where the matrices \(R(t)\) and \(S(t)\) are defined by
\begin{align}
	R(t)  &\equiv 	\left.\frac{\partial}{\partial \tau}\G{\tau}{t}\right|_{\tau=t}, \\ 	S(t)
		  &\equiv 	\left.\frac{\partial^2}{\partial \tau^2}\G{\tau}{t}\right|_{\tau=t}.
\end{align}

It is not difficult to show that there exists no differentiable choice of \(\theta(t,s)\) satisfying Eq.~\eqref{eq:gammatheta} that can reproduce a kernel \(\Gamma\) with a non-vanishing rate matrix \(R(t)\). Assuming differentiability of $\theta(\tau,t)$ at
$\tau=t$, we may write
\begin{align}
	\theta(t+dt,t) 	&= 	\mathbb{I} 	+ 	\dot{\theta}(t) dt 	+ 	O(dt^2).
\end{align}

so that \begin{align} \theta(t+dt,t) \odot \theta(t+dt,t)^{*} &= \bigl(\mathbb{I}+\dot{\theta}(t)dt\bigr) \odot \bigl(\mathbb{I}+\dot{\theta}(t)^{*}dt\bigr) \\ &= \mathbb{I} + \bigl( \dot{\theta}(t)\odot \mathbb{I} + \mathbb{I}\odot \dot{\theta}(t)^{*} \bigr)dt + O(dt^2). \end{align}

The first-order correction is purely diagonal. Column-stochasticity forces this diagonal $O(dt)$ term to vanish (the $O(dt)$ correction has zero column sums). Consequently, Eq. \eqref{eq:gammatheta} precludes any kernel \(\Gamma\) that is linearly connected to the identity.

This restriction excludes one of the most prevalent classes of stochastic processes, namely continuous-time Markov chains (CTMCs), whose dynamics are generated by a generally nonzero rate matrix \(R(t)\). Moreover, \(R\neq 0\) does
not by itself imply Markovianity: a process may exhibit first-order (\(O(dt)\)) leakage while still retaining memory in higher-order conditional structure. Thus the constraint \(R=0\) excludes not only CTMCs, but also intrinsically non-Markovian processes with genuine first-order leakage.

Making the choice Eq.~\eqref{eq:gammatheta} therefore imposes substantive additional structure on the space of stochastic processes; it is not a vacuous reparametrization. We refer to processes satisfying Eq.~\eqref{eq:gammatheta} as
\emph{\(\theta\)-processes}. As shown in Appendix~\ref{app:thetanonmarkov}, within
this subclass the Markovian sector collapses to the trivial case.\\

Of course, the primary purpose of the stochastic--quantum correspondence is to illuminate quantum mechanics itself, with the \(\theta\)-representation capturing the unitary component. Moreover, one of its central insights, as emphasized by Barandes, is that effective Markovian behavior may emerge from fundamentally non-Markovian underlying structure. The following argument
clarifies a key limitation of this mechanism: 
		
Let $\G{t}{s}_{SE}$ be a column--stochastic matrix acting on a \emph{finite} product space $S\times E$.
Differentiability of $\theta_{SE}$ implies that, entrywise,
\[
\G{t+\Delta t}{t}_{SE}-\mathbb I=O(\Delta t^2),
\qquad i\neq j.
\]

Let $M$ denote the fixed linear marginalization map onto $S$ (summing over $E$). Then
\[
p_S(t+\Delta t)-p_S(t)
=
M\bigl(\Gamma_{SE}(t+\Delta t,t)-\mathbb I\bigr)p_{SE}(t)
=
O(\Delta t^2).
\]
Thus, if the underlying dynamics is a differentiable \(\theta\)-process, then any \emph{fixed linear reduction} acting on probabilities, for example, marginalization over finitely many degrees of freedom, can generate at most \(O(\Delta t^2)\) leakage; in particular, no first-order (Markovian)
generator can arise from such a reduction.\footnote{The finiteness of \(E\) is essential in this argument. With infinitely many degrees of freedom (as in reservoir or thermal-bath constructions), marginalization may collect contributions from infinitely many channels, and an \(O(\Delta t^2)\) leakage per channel can sum to an \(O(\Delta t)\) effect in a non-uniform limit.}\\

It is important to distinguish between the existence of a discrete--time Markov chain (DTMC) and that of a genuine CTMC.  As noted by Barandes, if a process admits \emph{division events} at fixed intervals $dT$ that compose consistently, and the induced discrete dynamics is time-homogeneous, then the finite--time transition kernel satisfies
\[
\G{t+n dT}{t}=\G{t+dT}{t}^{ n},
\]
which is precisely the defining property of a DTMC sampled at interval $dT$.

This, however, does \emph{not} imply the existence of a CTMC in the limit. A CTMC additionally requires a finite infinitesimal generator
\[
R(t)=\lim_{dT\to 0}\frac{\G{t+dT}{t}-\mathbb I}{dT}.
\]

If $\Gamma$ is $O(dT^2)$ then a uniform time--sampling refinement drives the effective linear-in-$dT$ rate to zero in the DTMC to CTMC continuum limit. An explicit limiting construction that \emph{does} convert the DTMC induced by a $\theta$-process into a CTMC, by means of a singular, non-uniform sampling in time, is given in Appendix~\ref{app:DTMCtoCTMC}.

Thus fixed-spacing division events readily produce a DTMC, whereas obtaining a nontrivial CTMC is more restrictive.

Nothing in the general framework of stochastic processes forbids transition kernels with a genuine first--order short--time behavior, classical CTMCs are of precisely this form. The point is, that the stochastic quantum correspondence identifies a distinguished subclass of kernels, the $\theta$-processes, for which such first--order leakage is excluded. 

First--order leakage \emph{can} arise through Stinespring dilation (i.e., by embedding the system into a larger unitary dynamics and reducing)  but it requires infinite degrees of freedom such as an idealized reservoir.  While dilation is always formally available, it obscures the structural assumptions responsible for the resulting dynamics.

This motivates seeking a generalization of the stochastic-quantum correspondence that accommodates first-order behavior from the outset rather than deriving it indirectly through unitary dilations on infinite spaces and subsequent reductions. In this way, ``classical'' dynamics with $R\neq 0$ and $\theta$-process dynamics can be accommodated within a single setting.

More broadly, as we shall see, reformulating the correspondence at this level clarifies the assumptions linking stochastic kernels to their operator lifts, and makes explicit which structural inputs are essential.

\section{Generalizing the correspondence}
\label{sec:gencor}

To understand what the correspondence accomplishes, it is helpful to isolate the key role played by unitaries in the original construction. For an indivisible stochastic evolution, the probabilistic description is not naturally step-wise: the two-time kernels don't satisfy the Chapman--Kolmogorov (CK) composition law. The correspondence bypasses this obstruction by lifting the dynamics to an enlarged linear representation on \(\mathcal B(\mathcal H)\): where the additional off-diagonal degrees of freedom provide enough redundancy to support consistent CK composition at the level of the unitary evolution. From this perspective, quantum mechanics’ versatility lies in its ability to represent the system at intermediate times. Our goal below is to make this lifting mechanism explicit and to extend it beyond the \(\theta\)-process setting.

\subsection{Operator lifts and the compatibility condition}
While the original formulation uses the \(\theta\)-representation, one may view it, in broad terms, as a linear map on \(\mathcal B(\mathcal H)\) whose diagonal restriction reproduces the prescribed stochastic kernel. 

Fix a configuration basis \(\{|i\rangle\}_{i=1}^n\) of an \(n\)-dimensional Hilbert space \(\mathcal H\).
We represent a probability distribution at time \(s\) by a column vector \(p(s)\in\mathbb R^n\) (with \(p_i(s)\ge 0\) and \(\sum_i p_i(s)=1\)), and a two-time kernel \(\G{t}{s}\) by a linear map on \(\mathbb R^n\) (a column-stochastic matrix in this basis).\footnote{The existence of such a history-independent linear propagator on the stochastic side is not automatic for a general non-Markovian process. It requires the chosen root time \(s\) to be a division (screening-off) time in the sense discussed in Sec.~\ref{sec:divisionevents}. For the moment we simply assume that such a linear propagator exists.} Define the diagonal embedding
\[
J:\mathbb R^n \to \mathcal B(\mathcal H),\qquad
J(p)=\mathrm{diag}(p_1,p_2,\ldots,p_n)=\sum_{i=1}^n p_i\hat P^{i},
\]
where $\hat P^{i}:=|i\rangle\langle i|$ is the rank-one projector onto the $i$th basis vector. This identifies probability vectors with diagonal operators in the configuration basis.

We seek a linear map $\phi:B(\mathcal{H})\to B(\mathcal{H})$ such that, for diagonal inputs, the \emph{diagonal part} of $\phi$ reproduces the stochastic evolution. Let $\Pi:B(\mathcal{H})\to B(\mathcal{H})$ denote the projection onto the diagonal in the configuration basis,
\begin{equation}
	\Pi(\rho):=\sum_{j=1}^n \hat P^{j}\rho\hat P^{j},
\end{equation}
equivalently characterized by the diagonal entries
\[
\left[\Pi(\rho)\right]_{jj}=\operatorname{Tr} \bigl(\hat P^{j}\rho\bigr).
\]

Then the desired compatibility condition is
\setlength{\fboxsep}{8pt}
\begin{empheq}[box=\fbox]{equation}
	\Pi \bigl(\phi_{t,s}(J(p(s)))\bigr)=J \bigl(\G{t}{s} p(s)\bigr),\strut
	\qquad\text{\textsc{(Compatibility)}}
	\label{eqn:compat}
\end{empheq}

for all probability (column) vectors $p(s)$. The compatibility condition is the commutative-diagram requirement that characterizes the lift, see Fig. \ref{fig:compatibility-diagram}.

\begin{figure}[t]
	\centering
	{\large$
\begin{tikzcd}[column sep=huge, row sep=large, scale=1.15, transform shape]
	J(p(t)) \arrow[d, "D"]  & J(p(s))  \arrow[l, "{\Pi \circ \phi_{t,s}}"] \\
	p(t) & p(s) \arrow[l, "{ \Gamma(t\leftarrow s)}"] \arrow[u, "J"']
\end{tikzcd}
$}
	\caption{Commuting diagram for the lift compatibility condition. The map \(D\) denotes the readout map from a diagonal operator to the corresponding probability vector, with \(D\circ J=\mathrm{Id}\). The compatibility condition is \(D \Pi \phi_{t,s} J=\Gamma(t\leftarrow s)\). }
	\label{fig:compatibility-diagram}
\end{figure}

In general, $\phi$ will map a diagonal operator to one with off-diagonal components; the projection $\Pi$ extracts the diagonal sector, thereby providing the link back to ordinary probabilities.

Taking the trace of Eq. \eqref{eqn:compat} and using the column-sum property Eq. \eqref{eqn:colsum1}, we obtain
\begin{eqnarray}
	\operatorname{Tr}\bigl(\phi_{t,s}(J(p(s)))\bigr)
	&=&	\mathbf{1}^{\mathsf T}\G{t}{s} p(s),\\
	&=&	\operatorname{Tr}\bigl(J(p(s))\bigr),
\end{eqnarray}
where we repeatedly used the identity \(\operatorname{Tr}(J(v))=\mathbf{1}^{T}v\). Hence, if \(\Gamma\) is column stochastic, then \(\phi\) is trace preserving on diagonal densities.

Conversely, if $\phi$ is trace preserving on all density operators,
\begin{equation}
	\operatorname{Tr}\bigl(\phi(\rho)\bigr)=\operatorname{Tr}(\rho),
\end{equation}
then in particular it is trace preserving on the diagonal operators \(J(p)\) representing probability vectors. Using the compatibility condition \eqref{eqn:compat}, this implies that $\Gamma$ obeys the column-sum rule \eqref{eqn:colsum1}. Thus, global trace preservation of $\phi$ is a sufficient condition for $\Gamma$ to be column stochastic.

In component form, Eq.~\eqref{eqn:compat} reads
\begin{equation}
	p_j(t)
	=\sum_{i=1}^n\G{t}{s}_{ji} p_i(s).
\end{equation}

where the transition kernel is obtained by probing the diagonal of \(\phi\) on the basis projectors:
\begin{equation}
	\G{t}{s}_{ji}
	=
	\operatorname{Tr}\bigl(\hat P^{j}\phi_{t,s}(\hat P^{i})\bigr).
	\label{eqn:gammatrace}
\end{equation}

In the special case that \(\phi\) is given by conjugation, \(\phi_{t,s}(\rho)=\theta(t,s)\rho\theta(t,s)^\dagger\), \eqref{eqn:gammatrace} reduces to the Barandes dictionary 
\begin{equation}
\G{t}{s}_{ji}=\operatorname{Tr}\bigl(\hat P^{j}\theta(t,s) \hat P^{i}\theta(t,s)^\dagger\bigr)
=|\theta_{ji}(t,s)|^2.\label{eqn:thetadictionary}
\end{equation}

\subsection{Explicit constructions of compatible lifts}
In finite dimensions, any linear map \(\phi:B(\mathcal{H})\to B(\mathcal{H})\) admits a (not necessarily completely positive) left--right operator-sum representation: there exist operators \(A_\beta,B_\beta\in B(\mathcal{H})\) such that
\begin{equation}
	\phi(\rho) = \sum_{\beta=1}^{r}	A_\beta\rho B_\beta,
	\label{eqn:lineardecomp}
\end{equation}
with \(r\le n^2\).
Substituting Eq. \eqref{eqn:lineardecomp} into Eq. \eqref{eqn:gammatrace} yields
\begin{equation}
	\G{t}{s}_{ji} =	\sum_{\beta=1}^{r} \operatorname{Tr} \bigl(\hat P^{j}A_\beta(t,s)\hat P^{i}B_\beta(t,s)\bigr).
	\label{eqn:ABdictcomponent}
\end{equation}

Equivalently, in matrix form,
\begin{equation}
	\Gamma(t\leftarrow s) =	\sum_{\beta=1}^{r} A_\beta \odot B^T_\beta,
	\label{eqn:ABdict}
\end{equation}

Equation~\eqref{eqn:ABdict} may be viewed as a generalized version of the original stochastic--quantum dictionary \eqref{eqn:thetadictionary}, but its utility is more limited. The virtue of the special form \eqref{eq:gammatheta} is that it guarantees positivity of the transition kernel \(\Gamma\) automatically. By contrast, for a generic left--right representation \eqref{eqn:ABdict}, positivity of the induced stochastic kernel must be enforced separately. In components this amounts to
\begin{equation}
	\sum_{\beta=1}^{r} A_{\beta,ji}B_{\beta,ij} \ge 0
	\qquad \forall i,j.
\end{equation}
Similarly, trace preservation of \(\phi\) is equivalent to the condition
\begin{equation}
	\sum_{\beta=1}^{r} B_\beta A_\beta  =  \mathbb{I},
\end{equation}
since \(\operatorname{Tr}(\phi(\rho))=\operatorname{Tr} \bigl((\sum_\beta B_\beta A_\beta)\rho\bigr)\) for all \(\rho\).

We would like a construction that guarantees positivity from the outset. Restrict attention to the diagonal subalgebra \(\mathcal D\subset \mathcal B(\mathcal H)\), which is commutative and can be identified with \(\mathbb C^n\). On an abelian \(C^*\)-algebra, positivity already implies complete positivity, so any positive map \(\Phi:\mathcal D\to\mathcal D\) is automatically completely positive. By Arveson's extension theorem \cite{Arveson1969}, such a map admits at least one completely positive extension \(\phi:\mathcal B(\mathcal H)\to\mathcal B(\mathcal H)\). Thus every stochastic kernel \(\G{t}{s}\) admits at least one completely positive operator lift whose restriction to the diagonal algebra reproduces \(\G{t}{s}\).\footnote{Given a column-stochastic matrix $\G{t}{s}$, define Kraus operators $K_{ji}(t,s):=\sqrt{\G{t}{s}_{ji}}|j\rangle\langle i|,\qquad 1\le i,j\le n.$ Then $\phi_{t,s}(\rho):=\sum_{i,j} K_{ji}(t,s)\rho K_{ji}(t,s)^\dagger$ is CPTP and satisfies $\Pi(\phi_{t,s}(J(p)))=J(\G{t}{s}p)$ for all probability vectors $p$. Moreover $\sum_{i,j}K_{ji}(t,s)^\dagger K_{ji}(t,s)=\mathbb I$ holds because columns of $\G{t}{s}$ sum to $1$. }

Since trace preservation enforces column-stochasticity on the induced kernel and complete positivity enforces positivity, the natural class of lifts is given by completely positive, trace-preserving (CPTP) maps. Every completely positive map admits a Kraus (operator-sum) representation,

\begin{equation}
	\phi(\rho)=\sum_{\beta=1}^{r} K_\beta \rho K_\beta^{\dagger},
\end{equation}
and if $\phi$ is trace-preserving the Kraus operators satisfy the completeness condition
\begin{equation}
	\sum_{\beta=1}^{r} K_\beta^{\dagger}K_\beta = \mathbb{I}.
\end{equation}
In this case the dictionary takes the particularly transparent form
\begin{equation}
	\Gamma(t\leftarrow s) =	\sum_{\beta=1}^{r} K_\beta \odot K_\beta^{*},
	\label{eqn:Kdict}
\end{equation}
which now automatically ensures the column stochasticity of $\Gamma$.\footnote{This refinement may come at the price of reduced flexibility: it is conceivable that certain non-Markovian processes fail to admit a Chapman--Kolmogorov--type composition law at the level of completely positive trace preserving (Kraus) lifts, while still admitting such a composition at the level of more general linear left--right representations, provided positivity and the column sum property of the induced stochastic kernel is enforced by other means. }

Thus, we have shown that the transition kernel of an arbitrary stochastic process can be lifted, in the sense of Eq. \eqref{eqn:compat}, to a CPTP map via the dictionary Eq.~\eqref{eqn:Kdict}. In components, this dictionary reads
\begin{equation}
	\Gamma_{ji}(t\leftarrow s) = \sum_{\beta=1}^{r}	\bigl|K_{\beta,ji}(t\leftarrow s)\bigr|^{2},
	\label{eqn:Kdictcomp}
\end{equation}
a representation identified by Barandes in the context of non-unitary \(\theta\)-processes. As we have shown, however, the same form in fact applies to an arbitrary stochastic transition kernel \(\Gamma\).

To clarify the distinction, recall that the Barandes \(\theta\)-lift corresponds to a rank-one (single-Kraus) conjugation map,
\begin{equation}
	\phi_\theta(\rho)=\theta\rho\theta^\dagger.
	\label{eq:theta_conj}
\end{equation}
This map is completely positive for any \(\theta\), but it is trace preserving (and hence CPTP) if and only if \(\theta\) is unitary.\footnote{If \(\theta\) is unitary then \(\theta^\dagger\theta=\mathbb{I}\), so \(\phi_\theta\) is trace preserving and admits a Kraus representation with a single Kraus operator \(\theta\), hence is CPTP. Conversely, if \(\phi_\theta(\rho)=\theta\rho\theta^\dagger\) is trace preserving, then for all \(\rho\), \(\operatorname{Tr}(\phi_\theta(\rho))=\operatorname{Tr}(\rho\theta^\dagger\theta)=\operatorname{Tr}(\rho)\),
which implies \(\theta^\dagger\theta=\mathbb{I}\). Since \(\theta\) is square, this is equivalent to unitarity.}

However, Barandes observes that one may instead resolve \(\theta\) into column selectors
\begin{equation}
	K_\beta:=\theta\hat P^{\beta},\qquad \beta=1,\dots,n,
	\label{eqn:KBarandes}
\end{equation}
so that \(\theta=\sum_{\beta=1}^n K_\beta\), and define the associated operator-sum map
\[
\Phi_\theta(\rho):=\sum_{\beta=1}^n K_\beta \rho K_\beta^\dagger .
\]
The operators \(K_\beta\) furnish a Kraus representation, and \(\Phi_\theta\) is CPTP. \footnote{\(\sum_\beta K_\beta^\dagger K_\beta=\sum_\beta \hat P^{\beta}\theta^\dagger\theta \hat P^{\beta}=\mathbb{I}\) for \(\theta\) column-stochastic in the \(\theta\)-process sense.}

The reason Barandes can recast the \(\theta\)-lift in CPTP form, even when the conjugation map \(\phi_\theta(\rho)\) is not trace preserving, is that on the diagonal subalgebra there is no operational distinction between \(\phi_\theta\) and \(\Phi_\theta\) .\footnote{
	If \(\rho=\sum_i \rho_i \hat P^{i}\) is diagonal, then
	\(\Phi(\rho)=\sum_{\beta=1}^n \theta \hat P^{\beta}\rho \hat P^{\beta}\theta^\dagger
	=\sum_{\beta=1}^n \rho_\beta \theta \hat P^{\beta}\theta^\dagger
	=\theta\rho\theta^\dagger.\)
}
In particular, both lifts induce the same stochastic kernel:
$\Gamma_{ji}=\operatorname{Tr}(\hat P^{j}\phi_\theta(\hat P^{i}))=\operatorname{Tr}(\hat P^{j}\Phi_\theta(\hat P^{i})))$.

The Kraus map \eqref{eqn:KBarandes} may be understood as a non-selective projective measurement (L\"uders dephasing) in the configuration basis, followed by conjugation with \(\theta\).  Moreover, since the induced stochastic kernel is unchanged, Kraus operators of this form inherit the short--time structure of the underlying $\theta$-process and therefore cannot generate genuine $O(dt)$ leakage on diagonal states. 

The generalized correspondence developed here is not restricted to lifts of this special form.  Instead, it permits arbitrary CPTP maps consistent with Eq.~\eqref{eqn:Kdict}.  In particular, once this structural restriction is removed, there is no obstruction to realizing stochastic kernels with intrinsic first--order in time behavior.

\subsection{CK divisibility, composition, and Lindblad generators}
We have shown that the transition kernels of any stochastic evolution may be lifted to a family of linear, completely positive, trace-preserving maps 
\(\{\phi_{t,s}\}\) on \(B(H)\).

Recall that the motivation for the lift is precisely this: an \emph{indivisible} stochastic process, i.e., one that does not satisfy a Chapman--Kolmogorov composition law at the level of transition probabilities, may nevertheless admit a composition law once recast as a family of linear maps on
\(B(\mathcal H)\),
\[
\phi_{t,s}=\phi_{t,u}\circ \phi_{u,s},\qquad s\le u\le t.
\]
Continuity is an additional assumption here (implicit in the short-time expansion of Eq. \eqref{eq:gammaexpansion}); the remaining structural input is then exactly this lifted composition property.

If such a composition property can be imposed at the level of the lifted CPTP maps, then the resulting structure falls into a well-known class, whose generator is characterized by the Gorini--Kossakowski--Sudarshan--Lindblad theorem \cite{GKS1976,Lindblad1976}.

Assume $\{\phi_{t,s}\}_{t\ge s}$ is a CK-divisible family of CPTP maps that is strongly continuous and differentiable in $t$. Define the time-local generator by
\[
\mathcal L_t := \left.\frac{\partial}{\partial \tau}\phi_{\tau,t}\right|_{\tau=t}.
\]
Then $\rho(t)=\phi_{t,s}(\rho(s))$ satisfies the master equation
\[
\frac{\partial}{\partial t}\rho(t)=\mathcal L_t(\rho(t)),
\]
and $\mathcal L_t$ has GKSL form
\begin{equation}
\mathcal L_t(\rho) =-i[H(t),\rho]+\sum_\mu\Bigl(L_\mu(t)\rho L_\mu(t)^\dagger-\tfrac12\{L_\mu(t)^\dagger L_\mu(t),\rho\}\Bigr), \label{eqn:lindblad}
\end{equation}
for some Hamiltonian $H(t)=H(t)^\dagger$ and operators $L_\mu(t)\in B(\mathcal{H})$.

The relationship between the discrete Kraus operators and the Lindblad operators is obtained from the short-time expansion. For sufficiently small $dt$, we consider the short-time map $\phi_{t+dt,t}$ and choose a Kraus representation
\[
\phi_{t+dt,t}(\rho)=\sum_\alpha K_\alpha(t+dt,t) \rho K_\alpha(t+dt,t)^\dagger.
\]
A standard short-time choice consistent with the GKSL generator is
\begin{align}
	K_0(t+dt,t)	&=\mathbb I	-	dt\left(iH(t) + \frac{1}{2}\sum_\mu L_\mu(t)^\dagger L_\mu(t)\right) + o(dt),\\
	K_\mu(t+dt,t)&=	\sqrt{dt} L_\mu(t)	+ o(\sqrt{dt}),	\qquad \mu\ge 1.
\end{align}
Substituting into the Kraus form and retaining terms to first order in $dt$ yields
\begin{equation}
	\phi_{t+dt,t}(\rho)	=\rho +	dt \mathcal L_t(\rho) + o(dt),
\end{equation}
with \(\mathcal L\) given by Eq.~\eqref{eqn:lindblad}.  

Thus the Lindblad operators \(L_\mu\) arise as the \(\sqrt{dt}\)-coefficients in the short-time Kraus expansion, while the Hamiltonian and dissipative drift terms are encoded in the first-order correction to the near-identity Kraus operator $K_0$.

Thereby one establishes a correspondence between stochastic processes that may be non-Markovian, and even indivisible at the level of transition kernels, and the framework of quantum open-system dynamics governed by Lindblad master equations.  In this way, questions formulated in the language of stochastic processes may be translated into the language of open quantum systems, where a step-by-step description of those stochastically indivisible  dynamics is available.

The correspondence developed above does \emph{not} guarantee that one can choose the lifts so as to form a single composable family satisfying CK. Rather, the lift supplies additional Hilbert-space freedom with which such a divisible structure \emph{may} be realizable. Whether a CK-consistent CPTP family exists is therefore an additional, case-by-case compatibility question, beyond mere existence of pairwise CPTP extensions. A criterion for determining whether a lift is of the CK form is provided in Appendix \ref{app:CKcriterion}.

\subsection{Example: CTMC}
As a simple example, we provide the lift in the trivial (divisible) CTMC case. The short-time expansion of a CTMC follows directly from the Chapman--Kolmogorov composition law, Eq.~\eqref{eq:ck}, from which one obtains
\begin{equation}
	S(t) = \dot{R}(t) + R(t)^2 ,
\end{equation}
demonstrating that the first- and second-order contributions in time are generically both non-vanishing, though constrained by the composition property.

Such a Markovian stochastic evolution admits a natural embedding into a Lindblad master equation~\eqref{eqn:lindblad}.  In particular, given a rate matrix \(R(t)\) with off-diagonal entries \(R_{ij}(t)\ge 0\) for \(i\neq j\) and diagonal entries fixed by column conservation, one may choose jump operators indexed by ordered pairs \((i,j)\),
\begin{equation}
	L_{ij}(t)= \sqrt{R_{ij}(t)}|i\rangle\langle j|,\qquad i\neq j,
\end{equation}
together with a Hamiltonian diagonal in the same basis.  

If the initial state is diagonal in the basis \(\{|i\rangle\}\), and measurements are likewise performed in this basis, then the Lindblad evolution preserves diagonality and reduces exactly to the classical CTMC equation
\begin{equation}
	\dot{p}(t)=R(t)p(t).
\end{equation}
In this sense, the classical CTMC is recovered as the restriction of the Lindblad dynamics to the diagonal subalgebra.

\section{Phases from phase-blind dynamics}
\label{sec:phaseblind}
It is worth pausing to clarify what the preceding construction establishes from what it does not.
One may note that the mechanism to obtain the stochastic description, is formally straightforward: given an open quantum system
evolving under Lindblad dynamics, fix a basis, restrict to initial states that are diagonal in that basis, evolve them, and then apply the projection onto the diagonal subalgebra. The induced evolution of the diagonal entries is then described by two-time conditional probabilities, i.e.\ by stochastic transition kernels. In this direction of the correspondence, the resulting stochastic process is the restriction of the quantum dynamics to a fixed measurement basis.

It is also useful to place the converse direction in context. Since the evolution of marginal distributions under a transition kernel is linear, it is natural that it admits representations on enlarged spaces---for example, by embedding probability vectors into the diagonal subalgebra. What is more substantive is that such a lift can admit a divisible description at the level of the lifted map even when the original stochastic evolution is indivisible at the level of two-time transition kernels.

However, what is really striking about the correspondence, at least from the present author’s perspective, is that the lifting procedure generates the phase-rich, \emph{no-preferred-basis} structures characteristic of quantum mechanics from phase-blind stochastic data specified in a \emph{fixed} basis. In other words, dynamics formulated purely in terms of probabilities that are insensitive to phases can, upon embedding into the operator-level picture, give rise to interference-capable structure, living in a genuinely non-commutative operator algebra, that is invisible in the original stochastic description.

Since the correspondence is, at face value, engineered only to reproduce statistics in a fixed configuration basis (i.e., by projecting onto the diagonal and then translating back to probabilities via the compatibility condition), one might worry that the ``phase structure'' of the quantum description is merely ornamental, with no measurable consequences on the stochastic side.

However, as Barandes showed, this concern can be addressed operationally: by appropriately engineering the coupling between the system and a measuring device, the observed statistics in the device can be made to reproduce the familiar ``basis-dependent'' predictions of quantum mechanics, for arbitrary choices of measurement basis. In this sense, the basis covariance of quantum theory emerges from a single underlying stochastic process on a fixed configuration space. Indeed, within this framework, the full range of measurement statistics normally attributed to quantum experiments may be recovered from an underlying non-quantum stochastic description.

If one takes seriously the proposal that all quantum phenomena arise from an underlying stochastic reality, a natural question immediately arises: where, on the stochastic side, does quantum theory’s rich phase-sensitive information reside?

We can sharpen this question by looking at a concrete example. \\

Consider two unitaries $U_X$ and $U_Y$, with matrix elements taken in the configuration basis, such that they define the same stochastic matrix $\Gamma$:
\begin{equation}
	\Gamma_{ij} = p(i \mid j) = |(U_X)_{ij}|^2  =  |(U_Y)_{ij}|^2 .
\end{equation}
We call any unitaries sharing this property \emph{one-step stochastically indistinguishable}. Here ``indistinguishable'' means only that the associated one-step probabilities $p(i\mid j)$ on the single-system configuration space are identical.

Yet, despite being stochastically indistinguishable, $U_X$ and $U_Y$ can yield measurably different statistics once one allows couplings to measuring devices. Although such devices are themselves part of the same underlying stochastic process (and hence are ultimately read out in configuration variables) their interaction with the system can be arranged so that the resulting configuration-space outcomes are naturally interpreted as the statistics of measurements on the system in bases other than the configuration basis. Thus a single stochastic physical process $\Gamma$ is compatible with many distinct quantum evolutions -- with \emph{different} measurement outcomes.

One might therefore be tempted to conclude that quantum mechanics is \emph{more} fundamental than the stochastic description, with $\Gamma$ merely a lossy ``shadow'' of an underlying quantum reality obtained by restricting attention to preparation and measurement in a fixed basis. This is, in essence, the standard perspective on quantum phenomena. If this conclusion is resisted, in the present two-unitary example the issue becomes concrete: what additional stochastic structure selects between these distinct quantum evolutions?

The resolution of this tension is to recognize that $\Gamma$ captures only \emph{one-step} information. A full stochastic process is not specified by its one-step transition statistics alone, but by the much richer collection of \emph{joint} probabilities over entire sequences of configurations.

Consider now what happens at the next step, when a further unitary $V$ acts. If the first-step realization is $U_X$ we obtain the two-step kernel
\[
\Gamma_1 = [V U_X]_{\odot},
\]
whereas if the first-step realization is $U_Y$ we obtain
\[
\Gamma_2 = [V U_Y]_{\odot}.
\]
Here we have introduced the shorthand $[U]_{\odot}:=U\odot U^{*}$. 

Although $[U_X]_{\odot}=[U_Y]_{\odot}=\Gamma$, one typically has $\Gamma_1\neq \Gamma_2$, and hence the two possibilities become distinguishable at the level of two-step statistics. The reason is that the modulus-square map does \emph{not} respect composition: in general,
\[
[V U]_{\odot} \neq [V]_{\odot}[U]_{\odot}.
\]
Equivalently, the resulting two-step transition need not factor as
\[
\Gamma_1  \neq  [V]_{\odot}[U_X]_{\odot}  =  \Gamma_V\Gamma,
\]
and similarly for $\Gamma_2$, reflecting the failure of divisibility at the level of the stochastic description.

The stochastic origin of this indivisibility is captured by the chain rule, Eq. \eqref{eq:chainrule}. Marginalizing over the intermediate configuration $x_1$ gives
\begin{equation}
	p(x_2\mid x_0)
	=\sum_{x_1} p(x_2 \mid x_1,x_0) p(x_1 \mid x_0). \label{eq:twotimeconstrint}
\end{equation}
From the stochastic perspective, the step from $t=1$ to $t=2$ is determined by the three-time conditional probability
\[
p(x_2\mid x_1,x_0),
\]
which depends on both the present configuration $x_1$ and the earlier configuration $x_0$ (and, more generally, on the entire history of the trajectory).

This is the familiar notion of \emph{memory} in a non-Markovian process: conditioning on the past is part of the primitive stochastic description.

By contrast, the lifted evolution is governed by a \emph{one-step} linear propagator on $B(\mathcal H)$ that composes unitarily in this case (or more generally via a CK divisibility structure). Such a propagator takes only the \emph{present} operator $\rho_{t_1}$ as input, and therefore cannot represent an explicit dependence on earlier configurations (the analogue of the $x_0$-dependence in $p(x_2\mid x_1,x_0)$). Any history dependence must instead be \emph{carried forward through composition} while remaining invisible at the level of the one-step stochastic kernel~$\Gamma$. Consequently, any ``memory'' can only reside in degrees of freedom internal to the lift -- that is, in the phase (gauge) freedom of the lifted map.\footnote{At the level of the density operator this manifests as the off-diagonal phase structure in $\rho_t=\phi_t(\rho_0)$.}

To see how this works, compare the difference in the recorded two-step statistics generated by the two lifted realizations. One finds
\begin{align}
	[\Gamma_1]_{x_2,x_0}- [\Gamma_2]_{x_2,x_0}
	&= \operatorname{Tr} \left(\hat{P}^{x_2}  V \left(U_X\rho_0 U_X^\dagger - U_Y\rho_0 U_Y^\dagger\right) V^\dagger\right)\\
	&= \sum_{x_1} \left(p_1(x_2 \mid x_1,x_0) - p_2(x_2 \mid x_1,x_0)\right)p(x_1 \mid x_0).
\end{align}
where $\rho_0=\hat{P}^{x_0}$. Thus, whatever distinctions in the conditional distributions $p_1(x_2\mid x_1,x_0)$ and
$p_2(x_2\mid x_1,x_0)$ account for the difference between $\Gamma_1$ and $\Gamma_2$ on the stochastic side must, on the lifted side, be encoded in the choice of $U_X$ versus $U_Y$.

In this way, two-step maps obtained by composing lifted one-step maps can distinguish realizations that are stochastically indistinguishable at the one-step level: they can separate $[V U_X]_{\odot}$ from $[V U_Y]_{\odot}$ even when $[U_X]_{\odot}=[U_Y]_{\odot}$.

In fact, from this perspective on the indivisibility of two-step processes, one can recover all information about the quantum state using the standard toolkit of quantum measurement theory even when preparation and readout are restricted to the configuration basis -- without any explicit operational appeal to a measurement device. Concretely, one applies an \emph{active} quantum channel immediately prior to a fixed readout in the configuration basis:
\begin{equation}
	\Lambda(\rho)=\sum_{\beta} \Lambda_{\beta} \rho \Lambda_{\beta}^{\dagger},
	\qquad
	\sum_{\beta} \Lambda_{\beta}^{\dagger}\Lambda_{\beta}=I.
\end{equation}
The probability of outcome $j$ is
\begin{align}
	p(j)
	&=\operatorname{Tr} \bigl(P^{j}\Lambda(\rho)\bigr)=\sum_{\beta}\operatorname{Tr}\bigl(\Lambda_{\beta}^{\dagger}P^{j}\Lambda_{\beta} \rho\bigr)\\
	&=\operatorname{Tr} \left(E_{j}\rho\right),
\end{align}
where the effects
\begin{equation}
	E_{j}:=\sum_{\beta}\Lambda_{\beta}^{\dagger}P^{j}\Lambda_{\beta},
\end{equation}
form a POVM (each $E_j\ge 0$ and $\sum_j E_j=I$).\footnote{Equivalently, define measurement operators for the joint event
``the channel used Kraus branch $\beta$ and the projector outcome is $j$'' by $M_{j\beta}:=P^{j}\Lambda_{\beta}$.
Then $M_{j\beta}^{\dagger}M_{j\beta}=\Lambda_{\beta}^{\dagger}P^{j}\Lambda_{\beta}$, and hence the POVM effects are obtained by coarse-graining over the unobserved Kraus branch, $E_{j}=\sum_{\beta} M_{j\beta}^{\dagger}M_{j\beta}$.} 

We note that, to obtain the statistics of a projective measurement in a non-configuration orthonormal basis, the required active operation reduces to a
unitary rotation, i.e.\ a single Kraus operator \(\Lambda=U\) whose columns are the eigenvectors of the desired measurement basis.

To make the point explicit on the stochastic side, suppose the lifted evolution up to time \(t\) is given by a CPTP
map
\[
\phi(\rho)=\sum_\alpha K_\alpha \rho K_\alpha^\dagger,
\]
whose induced (phase-blind) kernel is
\begin{equation}
	\Gamma  =  \sum_\alpha [K_\alpha]_{\odot}.
\end{equation}
To obtain measurement statistics consistent with a general POVM \(\{E_j\}\), it suffices to modify the \emph{final step} by performing a suitable operation \emph{indivisibly} at the readout time. The induced stochastic kernel for the readout is then
\begin{equation}
	\Gamma \longmapsto \Gamma' = \sum_{\beta, \alpha} [\Lambda_\beta K_\alpha]_{\odot}.
\end{equation}
In this way the usual \emph{basis-dependent} measurement statistics of quantum
mechanics are recovered from a phase-blind (diagonal) probabilistic description,
without explicitly introducing ancillary devices. Moreover, the availability of these
rich basis-dependent statistics may be interpreted as a freedom in the
stochastic description associated with latent memory---here encoded in the
two-step (and more generally multi-step) structure.

We emphasize, however, that an actual readout on the system necessarily induces a division at the measurement time, reinitializing the reduced state to a diagonal form in the configuration basis. If one instead wishes to obtain measurement statistics while leaving the quantum state in a non-configuration (non-diagonal) post-measurement state, one must couple
the system to a measurement-device degree of freedom and read out the device record (in configuration variables).\\

In discrete time with $m$ steps on $N$ configurations, an arbitrary unitary description involves at most $m\dim U(N)=mN^2$ real degrees of freedom\footnote{Similarly, with one CPTP map per time step, the real degrees of freedom remain polynomial in $N$ and $m$, namely $m(N^4-N^2$).}, whereas a completely general stochastic process is specified by a joint law on $X^{m+1}$ and hence requires $N^{m+1}-1$ independent real parameters. In this sense, the space of path-space memories is exponentially larger than the space of unitary lifts.

From this viewpoint, the original question: ``where does the phase information live on the stochastic side?'' has the emphasis backwards. The stochastic process already contains an enormous amount of information in its path-space memory, encoded in multi-time conditional laws. If the quantum lift is regarded as a mathematical device, the real puzzle is how a forward, memoryless CK evolution on a Hilbert space can nevertheless reproduce the nontrivial multi-time behavior that is latent in the history dependence of the process. The answer, as the stochastic correspondence makes clear, is that whatever historical dependence \emph{is} captured by the lift is compressed into degrees of freedom in the phase structure.

At the same time, this should not be overstated: the lift is \emph{not} a full representation of a stochastic process on path space. It is engineered to reproduce only the two-time transition data $\G{t}{s}$. Nevertheless, requiring the lifted family to be CK-divisible typically forces the lift to include additional degrees of freedom that are invisible at the level of the
transition kernels (i.e. additional information about the past not encoded in $\G{t}{s}$). This extra structure should not be identified with a uniquely determined higher-order law: many inequivalent stochastic processes with different multi-time joint distributions can share the same $\G{t}{s}$, and those distinctions lie beyond what $\G{t}{s}$ alone fixes. The lift is therefore compatible with multiple such multi-time completions.

The under-determined nature of the stochastic process by means of fixing the transition kernels, can already be seen in Eq. \eqref{eq:twotimeconstrint}.

For each fixed $x_0$ (there are $N$ choices), the conditional $p(x_2\mid x_1,x_0)$ may be regarded as an $N\times N$ stochastic matrix in the indices $(x_2,x_1)$: for each $x_1$ it specifies a probability distribution over $x_2$. Hence for each fixed $x_0$ it has $N(N-1)$ degrees of freedom, and across all $x_0$ this amounts to
\[
N\cdot N(N-1)=N^3-N^2
\]
free parameters.

Fixing the two-time kernels $p(x_1\mid x_0)$ and $p(x_2\mid x_0)$ imposes the constraint \eqref{eq:twotimeconstrint} for every pair $(x_2,x_0)$. For each $x_0$ these give $N$ equations in $x_2$, of which $N-1$ are independent due to normalization. Thus there are $N(N-1)=N^2-N$ independent constraints in total. Consequently, the space of admissible three-time conditionals consistent with the fixed two-time data has at least
\[
(N^3-N^2)-(N^2-N)=N(N-1)^2
\]
remaining degrees of freedom, demonstrating that transition kernels severely under-determine the multi-time structure of the process.

\section{Updates, interventions and division events}
\label{sec:divisionevents}

This section addresses the fourth question posed in the Introduction: the relation between conditional updating of the trajectory, physical interventions by embedded observers, and the emergence of division events during measurement. It also addresses the final question, namely why transition kernels nonetheless suffice to account for most empirical results.

\subsection{Realized trajectories}
A stochastic law by itself does not specify a trajectory. To describe a particular run, one must stipulate which fact (readout) was obtained and then update the model by conditioning on that fact. Thus, at an update time \(t\),
\begin{equation}
	p_{\rm prior}(x_t)\ \longmapsto\ p_{\rm post}(x_t).
	\label{eqn:update}
\end{equation}
When the realized outcome takes the value \(x_t=i\), the posterior distribution is the delta distribution
\begin{equation}
	p_{\rm post}(x_t)=\delta_{x_t i}.
	\label{eq:posterior_sharp}
\end{equation}

In the view advocated here within the stochastic--quantum correspondence, the system follows a definite trajectory with or without the existence of embedded observers. One may describe this either globally, as a single path drawn from a law on path space, or dynamically, as a sequential step-by-step generation consistent with the same law. In either case, one distinguishes between (i) an \emph{unconditional} law on path space and (ii) its \emph{conditional} restriction given specified facts.

In the dynamical picture, as time advances, the realized past grows, and the appropriate description of the future is the law conditioned on that past.

Embedded observers, by contrast, acquire information only through interaction. Crucially, such interactions are not passive revelations of a pre-existing law. Preparations and measurements are physical interventions that generically \emph{alter} the effective law governing the subsystem. As we shall show in the next subsections, an experiment typically replaces whatever undivided dynamics may have held prior to the preparation with a new effective law, namely one that screens off the history dependence prior to the preparation time.

This motivates a distinction between two notions of ``updating.'' Trajectory realization does not \emph{change} the underlying path-space law; it is simply conditioning that fixed law along the particular history that occurs. By contrast, an embedded observer's intervention \emph{does} change the subsystem's effective law.

\subsection{Division times and the existence of a linear propagator}
\label{sec:divisionevents}

Given an underlying stochastic process on a configuration space \(X\), one may choose \emph{any} finite sampling of times, including a distinguished initial time \(s\). From a purely probabilistic perspective, there is nothing intrinsically special about this choice: fixing a time sampling simply determines which parts of the path measure are marginalized, and the resulting chain-rule factorization is well defined for any such choice. In particular, for any pair of times \(s<t\), the two-time conditional probabilities \(p(x_t\mid x_s)\) are well defined, and therefore so is the induced relation
\begin{equation}
	p(x_t)=\sum_{x_s}p(x_t\mid x_s)  p(x_s).
	\label{eqn:twotimerelation}
\end{equation}

However, this relation is a statement about the stand-alone (global, unconditioned) marginals \(p(x_s)\) and \(p(x_t)\) of the given stochastic process. It holds for those particular marginals by construction, but it does \emph{not}, in general, define an autonomous propagation law on arbitrary probability distributions at time \(s\).\footnote{Such arbitrary probability distributions may include, for example, epistemic distributions representing the information accessible to an embedded observer at time \(s\), in addition to the particular stand-alone marginal induced by the underlying stochastic process and the sharp delta distributions corresponding to the realized trajectory.} For example, a trajectory taking the definite value \(x_s=i\) can arise from many different possible prior histories \(H\), where \(H\) denotes the realized pre-\(s\) trajectory.\footnote{Any such history determines, in particular, the values \(x_{h_1},\dots,x_{h_k}\) at an arbitrary finite collection of times \(h_1<\cdots<h_k<s\).} For each such realized history, the future probability is instead given by
\begin{equation}
	p(x_t\mid x_s=i, H).
\end{equation}
This is the well-known fact that, in non-Markovian processes, future prediction depends on the history and is not simply a function of the state at \(s\). In fact, we can refine Eq. \eqref{eqn:twotimerelation} to
\begin{equation}
	p(x_t\mid H)=\sum_{x_s}p(x_t\mid x_s, H)  p(x_s\mid H).
\end{equation}

The propagation therefore depends on how the state arrived at \(x_s\). Thus, although the pairwise conditional probabilities \(p(x_t\mid x_s)\) are always well defined, they cannot, in general, be interpreted as the kernel of a history-independent linear propagator.

This is particularly relevant to the quantum lift outlined in Sec.~\ref{sec:gencor}, where one constructs a family of quantum maps \(\{\phi_{t,s}\}\) from the \(s\)-rooted stochastic kernels \(\G{t}{s}\) in the sense of Eq.~\eqref{eqn:compat}. In that construction, it was implicitly assumed that the kernels \(\G{t}{s}\) act linearly on stochastic states as propagators.

We now identify the conditions under which this assumption is valid.

We begin with the sharp states at time \(s\). If, in a particular run, the trajectory takes the definite value \(x_s=i\), then the corresponding stand-alone probability distribution at time \(s\) is the delta distribution
\begin{equation}
e_i(x_s):=\delta_{x_s i}.
\end{equation}
These sharp states form the natural basis for the vector space generated by probability distributions on \(X\), since any distribution \(\mu\) may be written as
\begin{equation}
\mu=\sum_i \mu_i e_i.
\end{equation}

Moreover, for each prior history \(H\), the stochastic process assigns a well-defined conditional future distribution given that the trajectory has value \(x_s=i\) at time \(s\), namely
\begin{equation}
p(x_t\mid x_s=i,H).
\end{equation}

For each such \(H\), one may define a history-indexed map on the sharp basis states by
\begin{equation}
(T^{H}_{t\leftarrow s} e_i)(x_t):=p(x_t\mid x_s=i,H).
\end{equation}

Since the \(e_i\) form the natural basis of the vector space generated by probability distributions on \(X\), this action extends uniquely by linearity to arbitrary vectors:
\begin{equation}
T^{H}_{t\leftarrow s}\mu := \sum_i \mu_i  T^{H}_{t\leftarrow s}e_i.
\end{equation}
Equivalently, in components,
\begin{equation}
\bigl(T^{H}_{t\leftarrow s}\mu\bigr)(x_t)
=
\sum_i p(x_t\mid x_s=i,H) \mu_i.
\label{eqn:THcomponents}
\end{equation}
Thus, for each fixed history \(H\), one can always define a linear map. However, that map depends on the path by which the trajectory arrived at \(x_s\); in other words, it is preparation dependent.

A history-independent propagator exists exactly when the kernel is independent of \(H\), that is, when for all \(i\) and all \(t>s\),
\begin{equation}
	p(x_t\mid x_s=i,H)=p(x_t\mid x_s=i).
	\label{eqn:screeningoff}
\end{equation}
When Eq.~\eqref{eqn:screeningoff} holds, the family \(T^H_{t\leftarrow s}\) collapses to a single map \(T_{t\leftarrow s}\). Then, from Eq.~\eqref{eqn:THcomponents}, it follows that
\begin{equation}
(T_{t\leftarrow s}\mu)(x_t)
=
\sum_{x_s} p(x_t\mid x_s) \mu(x_s).
\end{equation}
Since the \(e_i\) form a basis, this linear extension is unique. In particular, in the basis \(\{e_i\}\), the propagator is represented by the matrix
\begin{equation}
\G{t}{s}=\left[p(x_t\mid x_s)\right].
\end{equation}

Thus, the two-time kernel \(p(x_t\mid x_s)\), as it appears in the relation among the global marginal distributions, is not independently assumed to be an autonomous propagator. It becomes one only in the special circumstance that the state at \(s\) screens off the prior history. In that case, the stochastic process induces a unique linear extension from the sharp conditioned states at \(s\) to all probability distributions.

This shows that the role of a division time is stronger than that of a merely convenient conditioning time. If the state at \(s\) does \emph{not} screen off the past, then two distinct pre-\(s\) histories may induce the same state at \(s\) while leading to different future marginals. In that case, no single history-independent propagation law exists on that state space, and consequently there is no quantum description of the stochastic dynamics via the stochastic-quantum correspondence.

A division time (when mediated by an intervention) is therefore a point at which the effective law is changed by screening off the history according to Eq.\eqref{eqn:screeningoff}, thereby initiating a new dynamical episode.\footnote{Division events may also occur spontaneously under undisturbed dynamics; the present emphasis is on the operationally controlled case.} Preparation is the deliberate, reproducible implementation of such a division. It resets the subsystem to the same post-division conditions from run to run, and also fixes the observer’s most recent recorded information. For this reason, the division time provides the natural root time for the kernels (and hence for the associated lift) and, operationally, the natural logical starting time for predictive modeling.
\newline

One may also wonder why transition kernels suffice for most physically accessible statistics. The reason is that typical experiments report only \emph{stand-alone} distributions at a single readout time. Once \(p(x_s)\) is fixed by preparation,
such marginals are closed under two-time propagation:
\[
p(x_t)=\Gamma(t\leftarrow s) p(x_s),
\]
so no additional pre-\(s\) information is required to predict \(p(x_t)\).

By contrast, genuinely multi-time statistics are operationally harder to access, because extracting information at intermediate times generally requires interventions that disturb the subsystem and induce division events. In this way, the very act of probing the process tends to re-start the effective dynamics, so that multi-time structure is not directly available from sequences of invasive measurements in the same way that stand-alone marginals are.

Yet it is important to keep the notions distinct. The screening-off condition required to define the propagator \(\G{t}{s}\) is a statement about the two-time conditional probabilities from the root time \(s\), namely
\[
p(x_t\mid x_s,H)=p(x_t\mid x_s), \qquad t>s.
\]
This is sufficient to define an autonomous propagation law from \(s\), and hence the associated quantum description. However, it is weaker than a full Markovian screening-off condition on path space, which would require the entire higher-order future conditional structure to be independent of the pre-\(s\) history once the relevant present data are specified.

Accordingly, even when the rooted two-time kernels \(\G{t}{s}\) define a history-independent linear propagator, it does not follow that all higher-order conditional probabilities are history independent. In particular, one may still have residual dependence of the form
\begin{equation}
	p(x_t\mid x_u,x_s,H)\neq p(x_t\mid x_u,x_s),
	\qquad s<u<t,
\end{equation}
even though
\[
p(x_t\mid x_s,H)=p(x_t\mid x_s).
\]
Thus, the existence of a quantum lift from the division time ensures closure of the stand-alone future marginals under the two-time propagator, but does not by itself guarantee erasure of memory in the full multi-time statistics.

We now state a theorem that gives a precise criterion for division events in the lifted description. Isolating it here clarifies the assumptions and makes the relevant restrictions explicit.

\subsection{Divisibility criterion}
We will say that a stochastic process $\Gamma$ is \emph{C-divisible} at a time $t_1\in[t_0,t_2]$ if there exists a stochastic transition matrix
$\widetilde{\Gamma}(t_2\leftarrow t_1)$ such that
\begin{equation}
	\G{t_2}{t_0}  =  \widetilde{\Gamma}(t_2\leftarrow t_1) \G{t_1}{t_0}.
\end{equation}

We will say that a quantum channel family $\mathcal E$ is \emph{Q-divisible} at $t_1\in[t_0,t_2]$ if there exists a quantum channel
$\widetilde{\mathcal E}_{t_2,t_1}$ such that
\begin{equation}
	\mathcal{E}_{t_2,t_0}  =  \widetilde{\mathcal E}_{t_2,t_1}\circ \mathcal{E}_{t_1,t_0}.
\end{equation}

Finally, we will say that a stochastic process is \emph{Q-divisible} at $t_1$ if its two-time transition kernels admit a lift to a quantum channel family (as in Sec.~\ref{sec:gencor}) that is Q-divisible at $t_1$.

\begin{theorem}[Divisibility criterion]
	\label{thm:divisibility_criterion}
	If a stochastic process is Q-divisible at time $t_1$ (with $t_0<t_1<t_2$) and the lifted state $\rho(t_1)$ is diagonal at time $t_1$ for every initially diagonal state $\rho(t_0)$, then the process is C-divisible at time $t_1$.
\end{theorem}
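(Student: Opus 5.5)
The plan is to define the candidate intermediate kernel directly from the lifted channel $\widetilde{\mathcal E}_{t_2,t_1}$ supplied by Q-divisibility, using the dictionary Eq.~\eqref{eqn:gammatrace}. Set
\[
\widetilde\Gamma(t_2\leftarrow t_1)_{ji}:=\operatorname{Tr}\bigl(\hat P^{j}\,\widetilde{\mathcal E}_{t_2,t_1}(\hat P^{i})\bigr).
\]
Since $\widetilde{\mathcal E}_{t_2,t_1}$ is CPTP, each entry is nonnegative, because $\widetilde{\mathcal E}_{t_2,t_1}(\hat P^i)\ge 0$. Each column sums to $\operatorname{Tr}\,\widetilde{\mathcal E}_{t_2,t_1}(\hat P^i)=1$. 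Hence $\widetilde\Gamma$ is a genuine column-stochastic matrix, as the definition of C-divisibility requires. This is exactly the trace-preservation/positivity argument given after Eq.~\eqref{eqn:compat}, now applied to $\widetilde{\mathcal E}$.

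Next I verify the factorization entrywise on the sharp basis. Fix an initial configuration $i$ and take $\rho(t_0)=\hat P^{i}=J(e_i)$. By the compatibility condition Eq.~\eqref{eqn:compat} for the lift $\mathcal E_{t_1,t_0}$, we have $\Pi(\mathcal E_{t_1,t_0}(\hat P^i))=J(\Gamma(t_1\leftarrow t_0)e_i)$. The diagonality hypothesis says $\mathcal E_{t_1,t_0}(\hat P^i)$ is already diagonal, so $\Pi$ acts trivially and
\[
\mathcal E_{t_1,t_0}(\hat P^i)=\sum_k \Gamma(t_1\leftarrow t_0)_{ki}\,\hat P^{k}.
\]
This is the key step, and it is where the hypothesis enters. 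Without diagonality, the off-diagonal part of $\rho(t_1)$ would feed into the diagonal of $\widetilde{\mathcal E}_{t_2,t_1}(\rho(t_1))$. The composed diagonal would then not be a function of the $t_1$ probabilities alone, which is precisely the phase-carried memory of Sec.~\ref{sec:phaseblind}.

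Now apply Q-divisibility and linearity of $\widetilde{\mathcal E}_{t_2,t_1}$:
\[
\Gamma(t_2\leftarrow t_0)_{ji}=\operatorname{Tr}\bigl(\hat P^j\mathcal E_{t_2,t_0}(\hat P^i)\bigr)
=\sum_k \Gamma(t_1\leftarrow t_0)_{ki}\operatorname{Tr}\bigl(\hat P^j\widetilde{\mathcal E}_{t_2,t_1}(\hat P^k)\bigr)
=\bigl[\widetilde\Gamma\,\Gamma(t_1\leftarrow t_0)\bigr]_{ji}.
\]
Here the first equality is Eq.~\eqref{eqn:gammatrace} for the lift at $(t_2,t_0)$. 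Since this holds for every $i$ and $j$, the matrix identity $\Gamma(t_2\leftarrow t_0)=\widetilde\Gamma(t_2\leftarrow t_1)\Gamma(t_1\leftarrow t_0)$ follows, which is C-divisibility at $t_1$.

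The only real obstacle is bookkeeping about which objects are lifted. Q-divisibility of the process means the family $\mathcal E$ lifts both $\Gamma(t_1\leftarrow t_0)$ and $\Gamma(t_2\leftarrow t_0)$ in the sense of Eq.~\eqref{eqn:compat}. The intermediate $\widetilde{\mathcal E}_{t_2,t_1}$, however, need not be the lift of the process's own $\Gamma(t_2\leftarrow t_1)$. The proof should therefore use only that $\widetilde{\mathcal E}$ is CPTP, and not identify $\widetilde\Gamma$ with $\Gamma(t_2\leftarrow t_1)$. Diagonality only needs to be checked on the basis projectors $\hat P^i$, since by linearity it then extends to all diagonal initial states.
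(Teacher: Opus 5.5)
Your proof is correct and follows essentially the paper's argument: diagonality plus compatibility at $(t_1,t_0)$ give $\mathcal E_{t_1,t_0}(J(p))=J(\Gamma(t_1\leftarrow t_0)p)$, this is pushed through $\widetilde{\mathcal E}_{t_2,t_1}$ using Q-divisibility, and $\widetilde\Gamma$ is read off as the kernel induced by $\widetilde{\mathcal E}_{t_2,t_1}$. The differences are only cosmetic: you work entrywise on the projectors $\hat P^i$ rather than on a general $p(t_0)$, and you explicitly verify that $\widetilde\Gamma$ is column-stochastic, which the paper only asserts parenthetically.
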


\begin{proof}
	Let $\G{t}{t_0}$ denote the stochastic transition matrix from $t_0$ to $t$, and let $p(t)$ denote the stand-alone probability vector at time $t$.  Then
	\begin{equation}
		p(t)=\G{t}{t_0} p(t_0).
		\label{eqn:kernelevolve}
	\end{equation}
	
	Let $\mathcal{E}_{t,t_0}:B(\mathcal H)\to B(\mathcal H)$ be a CPTP lift of $\G{t}{t_0}$ satisfying the compatibility condition \eqref{eqn:compat}, i.e.
	\begin{equation}
		J \bigl(p(t)\bigr)=\Pi \bigl(\mathcal{E}_{t,t_0}(J(p(t_0)))\bigr).
		\label{eq:compat_t0}
	\end{equation}
	
	Since the process is Q-divisible at $t_1$, there exists a CPTP map $\widetilde{\mathcal{E}}_{t_2,t_1}$ such that
	\begin{equation}
		\mathcal{E}_{t_2,t_0}
		=\widetilde{\mathcal{E}}_{t_2,t_1}\circ \mathcal{E}_{t_1,t_0}.
		\label{eq:qdiv}
	\end{equation}
	
	For an initially diagonal state $\rho(t_0)=J(p(t_0))$, the lifted state at time	$t$ is
	\begin{equation}
		\rho(t)=\mathcal{E}_{t,t_0}(\rho(t_0)).
	\end{equation}
	By assumption, $\rho(t_1)$ is diagonal, so
	\begin{equation}
		\rho(t_1)=J \bigl(p(t_1)\bigr)
		=\mathcal{E}_{t_1,t_0} \bigl(J(p(t_0))\bigr).
		\label{eq:diag_t1}
	\end{equation}
	
	Combining~\eqref{eq:compat_t0}--\eqref{eq:diag_t1} yields
	\begin{align}
		J \bigl(p(t_2)\bigr)
		&=\Pi \Bigl(\mathcal{E}_{t_2,t_0}(J(p(t_0)))\Bigr) \notag\\
		&=\Pi \Bigl(\widetilde{\mathcal{E}}_{t_2,t_1}\bigl(\mathcal{E}_{t_1,t_0}(J(p(t_0)))\bigr)\Bigr) \notag\\
		&=\Pi \Bigl(\widetilde{\mathcal{E}}_{t_2,t_1}\bigl(J(p(t_1))\bigr)\Bigr).
		\label{eq:compat_t1}
	\end{align}
	
	Equation \eqref{eq:compat_t1} is precisely the compatibility relation for a lift starting at $t_1$.  Let $\widetilde{\Gamma}(t_2\leftarrow t_1)$ denote the induced (stochastic) transition matrix defined by
	\begin{equation}
		\Pi \Bigl(\widetilde{\mathcal{E}}_{t_2,t_1}\bigl(J(p(t_1))\bigr)\Bigr)
		=J \bigl(\widetilde{\Gamma}(t_2\leftarrow t_1) p(t_1)\bigr).
		\label{eq:def_tildeGamma}
	\end{equation}
	Then \eqref{eq:compat_t1} and \eqref{eq:def_tildeGamma} imply
	\begin{equation}
		J \bigl(p(t_2)\bigr)=J \bigl(\widetilde{\Gamma}(t_2\leftarrow t_1) p(t_1)\bigr),
	\end{equation}
	hence, using Eq.\eqref{eqn:kernelevolve}:
	\begin{equation}
		\G{t_2}{t_0}p(t_0)=\widetilde{\Gamma}(t_2\leftarrow t_1) \G{t_1}{t_0} p(t_0).
	\end{equation}
	Since this holds for all initial probability vectors $p(t_0)$, we conclude
	\begin{equation}
		\G{t_2}{t_0}
		=\widetilde{\Gamma}(t_2\leftarrow t_1) \G{t_1}{t_0},
	\end{equation}
	i.e.\ the process is C-divisible at $t_1$.
\end{proof}

A few remarks are in order.

\begin{enumerate}
	\item
	The criterion does \emph{not} require the full Chapman--Kolmogorov (CK) composition property for the lifted family.  It suffices that the lift be Q-divisible at the \emph{single} intermediate time $t_1$ appearing in the factorization.
	
	\item
	That said, if the stochastic process does admit a CPTP lift that has the CK composition property (this includes as a special case all unitary lifts), then the theorem implies: whenever $\rho(t_1)$ is diagonal for	every initially diagonal $\rho(t_0)$, the underlying process is C-divisible at $t_1$.

	\item 
	It is worth emphasizing that the converse direction fails in general: C-divisibility of $\Gamma$ at $t_1$ together with Q-divisibility of a chosen lift at $t_1$ does \emph{not} imply that $\rho(t_1)$ is diagonal.
	
\end{enumerate}

Finally, the criterion also provides a convenient sufficient condition for when interaction with an environment induces a division event:

\paragraph{Operational division event from an initially uncorrelated environment.}
Let $x_k$ ($y_k$) denote the system (environment) configuration at time $t_k$. Assume the system is initially uncorrelated with its environment, in the sense that for each choice of system marginal $p(x_0)$ the joint classical distribution factorizes as
\[
p(x_0,y_0)=p(x_0) p(y_0),
\]
where $p(y_0)$ is fixed (independent of $p(x_0)$). Equivalently,
\[
\rho_{SE}(t_0)=J(p(x_0))\otimes J(p(y_0)).
\]

Suppose that after an interaction interval the joint state at time $t_1$ is of classical--quantum (record) form for \emph{every} $p(x_0)$,
\[
\rho_{SE}(t_1)=\sum_{x_1} p(x_1) |x_1\rangle\langle x_1|\otimes \tau_{x_1},
\]
so that $\rho_S(t_1)=\mathrm{Tr}_E \rho_{SE}(t_1)=J(p(x_1))$ is diagonal for all diagonal initializations.

If moreover the subsequent evolution decouples (i.e., the systems no longer interact after $t_1$), so that
\[
\mathcal E^{SE}_{t_2,t_1}=\mathcal E^S_{t_2,t_1}\otimes \mathcal E^E_{t_2,t_1}, \quad  \text{for all } t_2>t_1,
\]
then the reduced lift on $S$ is Q-divisible at $t_1$ and Thm.~\ref{thm:divisibility_criterion} implies that the induced stochastic process on $S$ is C-divisible at $t_1$. In particular, there exists a stochastic matrix $\widetilde{\Gamma}(t_2\leftarrow t_1)$ such that
\[
p(x_2)=\widetilde{\Gamma}(t_2\leftarrow t_1) p(x_1).
\]

\section{Conclusion}

The original presentation of the stochastic–quantum correspondence could easily be read as suggesting a kind of inevitability: that unitary quantum mechanics emerges as a “first-class” description from essentially arbitrary stochastic assumptions. In that framing, the hallmark of genuinely quantum dynamics appeared to be precisely those indivisible stretches of stochastic evolution -- periods in which the Chapman–Kolmogorov (CK) property fails, while Markovian dynamics were cast as “second-class,” arising only after coupling to noisy environments and taking reduced dynamics.

That viewpoint is, however, puzzling in at least two respects. First, in the standard theory of stochastic processes, one can begin from the same raw ingredients and obtain perfectly legitimate, structurally complete Markovian models as first-class objects, without any appeal to open-system embeddings or coarse-grained reductions. Second, the earlier narrative leaves opaque what is actually special about the “quantum-associated” processes: what distinguishes them intrinsically, and where do they sit within the larger landscape of stochastic dynamics?

Our aim has been to clarify these points by generalizing the correspondence. With that generalization in hand, the distinction becomes sharper. The genuinely “pure” quantum sector corresponds to the unistochastic case (as is already well appreciated): those stochastic evolutions that can be represented, at the level of transition probabilities, as the elementwise mod-square of a unitary. Beyond this special locus, generic stochastic systems carry additional structure and this structure manifests, on the lifted side, naturally as quantum channel dynamics. In this sense, unitary quantum mechanics is not the generic outcome of mere stochasticity, but rather a particular corner of it; and the open-system formalism is the natural lifted expression of stochastic dynamics once one moves away from the unistochastic corner.

From this view, unistochasticity corresponds to a suppression of first-order in time leakage: where the effective rate matrix is \(R\approx 0\). The correspondence therefore reframes the foundational question as one of explaining why our fundamental theories appear to lie so persistently in this regime.

We furthermore addressed the natural worry that a phase-blind stochastic description cannot support genuinely phase-sensitive quantum behavior.  The resolution is that the one-step kernel $\G{t}{s}$ does not specify the process: the missing distinctions reside in higher-order conditional structure (path-space memory).  With this understood, one may invert the question: when a lift to $B(\mathcal H)$ \emph{is} available, where can such memory be stored in a forward, divisible type of lifted map?  The answer is that although $\G{t}{s}$ fixes only configuration-basis two-time statistics, a lift to $B(\mathcal H)$ introduces off-diagonal degrees of freedom that can function as a compressed memory register. This phase sector is invisible at the level of one-step transition probabilities, yet it can become operationally relevant under composition and suitably engineered system--apparatus couplings.

A subtle point in Barandes' operational argument concerns the role of the measuring device.  Barandes showed that, by suitably engineering a system--apparatus coupling, one can reproduce the statistics of measurements in bases other than the preferred
(configuration) basis, even though the underlying stochastic description is formulated in that basis.  Taken at face value, this can make the origin of the required phase-sensitive information appear mysterious: is the ``missing'' information carried
only in a larger, multi-system stochastic process that includes the apparatus, and if so is the construction contextual in the sense that the relevant distinctions are encoded only through the measurement interaction itself?

The present viewpoint clarifies this.  The one-step kernel $\G{t}{s}$ is phase-blind, but it does not specify the stochastic process: different path-space laws can share the same one-step kernel.  The additional information that distinguishes stochastically indistinguishable one-step realizations resides on the stochastic side in higher-order conditional structure (equivalently, in multi-time joint statistics), i.e., in the process' memory.  A system--apparatus coupling is one way to \emph{read out} such distinctions, but it is not the only way they become operational.  Even for a single system, successive composition can reveal the difference: two unitaries that agree at the level of $|U_{ij}|^2$ can yield distinct two-step kernels once followed by a further gate, precisely because the process is indivisible.  

We used this observation to show that \emph{active} modifications of the effective stochastic dynamics (i.e.\ an indivisible continuation of the process applied immediately prior to readout in the configuration variables) can be used to access the full information ordinarily encoded in a quantum state. Complete reconstruction is then possible through controlled repetitions with different intervention settings, i.e.\ by state tomography.

Finally, we distinguished the objective trajectory realization from the interventions of embedded observers, and we stated and proved a mathematically precise criterion for a division event from the lifted perspective.

It is remarkable that quantum mechanics, often regarded as conceptually mysterious, finds a natural home within the mathematically rigorous and conceptually transparent framework of general stochastic processes. In this sense, the stochastic--quantum correspondence places the probabilities at the heart of quantum theory in a precise mathematical setting. While quantum theory already provides a consistent calculus for predicting observed statistics, the correspondence adds a complementary dimension by re-expressing those statistics in the language of stochastic dynamics, thereby offering a unified framework for interpreting and understanding them.

\newpage
\appendix

\section{The only Markovian $\theta$-process is the trivial process.}
\label{app:thetanonmarkov}

Fix $t>s$ and partition the interval $[s, t]$ into $n$ equal sub-intervals of length
$h=(t-s)/n$. By the Chapman--Kolmogorov property,
\begin{equation}
	\G{t}{s}=\prod_{k=1}^n \G{s+kh}{s+(k-1)h}.
	\label{eqn:prdrep}
\end{equation}
Define
\[
\alpha(h):=\max_j\bigl(1-\G{s+h}{s}_{jj}\bigr)=O(h^2).
\]
The probability of at least one transition away from the initial state over the $n$ steps is bounded above by
\[
n\alpha(h)=O \left(\frac{(t-s)^2}{n}\right),
\]
which tends to zero as $n\to\infty$. Since the product representation (\ref{eqn:prdrep}) holds for all $n$, this forces
\[
\G{t}{s}=\mathbb I.
\]
Thus any nontrivial $\theta$-process must be non-Markovian.

\section{DTMC$\to$CTMC via an artificially accelerated scaling.}
\label{app:DTMCtoCTMC}
The following construction is \emph{not} the standard discrete--to--continuous limit used to define CTMCs. Rather, it is introduced here solely to illustrate how an effective first--order generator can arise from per--step dynamics that is only second order in the underlying time increment, provided one allows an intrinsically singular limiting procedure.

Let $k\in\mathbb{N}$ label discrete steps and let $p_k\in\mathbb{R}^N$ denote the (column) probability vector at step $k$. A discrete--time Markov chain (DTMC) with one--step transition matrix $\Gamma^{(\varepsilon)}$ evolves according to
\[
p_{k+1}=\Gamma^{(\varepsilon)}p_k,\qquad \Gamma^{(\varepsilon)}_{ij}\ge 0, \quad \mathbf 1^\top \Gamma^{(\varepsilon)}=\mathbf 1^\top .
\]

Let $t_*>0$ be a fixed microscopic time scale and let $\varepsilon>0$ be a dimensionless parameter. Assume that for each fixed $\varepsilon$ the one--step transition kernel over a microscopic time interval
$\delta t(\varepsilon):=\varepsilon^2 t_*$ satisfies 
\begin{equation}\label{eq:Gamma_eps_quadratic_fixed_noncanonical}
	\Gamma^{(\varepsilon)}=	\mathbb I +	(\varepsilon^2 t_*)R +	o(\varepsilon^2),	\qquad	\varepsilon\to 0,
\end{equation}
where $R$ is a (column) rate matrix, i.e.\ $R_{ij}\ge 0$ for $i\neq j$ and $\mathbf 1^\top R=0^\top$. Thus each microscopic step produces only $O(\varepsilon^2)$ probability leakage, and no first--order generator exists at the microscopic scale.

Now introduce an \emph{effective} macroscopic time variable $t$ by accelerating the step counter according to
\[
t := k\delta t(\varepsilon)=k\varepsilon^2 t_*,\qquad\text{equivalently}\qquad k=\Bigl\lfloor\frac{t}{\varepsilon^2 t_*}\Bigr\rfloor .
\]
Define $p^{(\varepsilon)}(t):=p_{\lfloor t/(\varepsilon^2 t_*)\rfloor}$. Then, for fixed $t$,
\[
p^{(\varepsilon)}(t)
=
\bigl(\Gamma^{(\varepsilon)}\bigr)^{\lfloor t/(\varepsilon^2 t_*)\rfloor}p_0.
\]
Using \eqref{eq:Gamma_eps_quadratic_fixed_noncanonical} together with the standard limit $\lim_{m\to\infty}(I+\tfrac{t}{m}R)^m=e^{tR}$, one obtains 
\begin{equation}\label{eq:DTMC_to_CTCM_singular_noncanonical}
	\lim_{\varepsilon\to 0}
	\bigl(\Gamma^{(\varepsilon)}\bigr)^{\lfloor t/(\varepsilon^2 t_*)\rfloor}
	=
	e^{tR},
	\qquad
	p(t)=e^{tR}p_0,
\end{equation}
and hence an effective continuous--time Markov evolution on the rescaled time variable $t$,
\[
\frac{d}{dt}p(t)=Rp(t).
\]

\smallskip
\noindent
This construction should be understood as an explicitly \emph{non-uniform} limit. For each fixed $\varepsilon$, the single--step deviation satisfies
\[
\Gamma^{(\varepsilon)}-\mathbb I=O(\varepsilon^2),
\]
so no first--order generator exists at the microscopic time scale $\delta t(\varepsilon)$. The linear generator arises only because the number of compositions contributing to a fixed interval of the effective time variable diverges as
\[
n(\varepsilon)\sim \frac{t}{\varepsilon^2 t_*}\to\infty,
\]
so that
\[
n(\varepsilon)\bigl(\Gamma^{(\varepsilon)}-\mathbb I\bigr)
 \sim 
\frac{t}{\varepsilon^2 t_*}\cdot(\varepsilon^2 t_* R)=tR.
\]
The emergence of Markovian dynamics in this example is therefore not a generic feature of discrete--time chains, but a consequence of an intentionally accelerated and singular scaling introduced here solely to demonstrate how apparently second--order per--step leakage can be converted into a first--order effective generator.

\section{CK map criterion}
\label{app:CKcriterion}

We now provide a practical criterion for determining whether a chosen lift is Chapman--Kolmogorov (CK) at the level of the full operator map.

\subsection{Dictionary in superoperator form}

We express \eqref{eqn:compat} in Liouville (superoperator) form. Write $\mathrm{vec}(\cdot)$ for column-stacking.  Use the standard identity
\begin{equation}
	\mathrm{vec}(A X B)  =  (B^{\mathsf T}\otimes A)\mathrm{vec}(X).
	\label{eq:vec-identity}
\end{equation}
Thus if $\phi_{t,s}$ admits a linear decomposition
\[
\phi_{t,s}(X)=\sum_{\beta=1}^r A_\beta(t,s)XB_\beta(t,s),
\]
its Liouville (superoperator) matrix is
\[
S_{t,s}=\sum_{\beta=1}^r \bigl(B_\beta(t,s)^{\mathsf T}\otimes A_\beta(t,s)\bigr),
\qquad
\mathrm{vec} \bigl(\phi_{t,s}(X)\bigr)=S_{t,s}\mathrm{vec}(X).
\]

Introduce the injection $D\in\mathbb{R}^{n^2\times n}$ defined by
\[
\mathrm{vec}(J(p))=Dp,
\]
so $D$ places $p_i$ into the $(i,i)$-slots of $\mathrm{vec}(\cdot)$ and $D^{\mathsf T}$ extracts the diagonal:
$D^{\mathsf T}\mathrm{vec}(\rho)=(\rho_{11},\ldots,\rho_{nn})^{\mathsf T}$. If we write the index:
\[
k(i):= i+(i-1)n,\qquad i=1,\ldots,n,
\]
then $D$ may be written as
\[
D=\sum_{i=1}^n e_{k(i)}e_i^{\mathsf T}.
\]
Equivalently, $D_{k(i),i}=1$ and all other entries vanish.

Likewise, dephasing is represented by a projector $P\in\mathbb{R}^{n^2\times n^2}$ with
\[
\mathrm{vec}(\Pi(\rho))=P\mathrm{vec}(\rho),\qquad P^2=P,\qquad PD=D.
\]
Vectorizing \eqref{eqn:compat} yields
\[
PS_{t,s}Dp = D\Gamma(t,s)p\qquad\forall p,
\]
and hence the compatibility constraint may be written equivalently as
\begin{equation}
	\Gamma(t,s)  =  D^{\mathsf T}PS_{t,s}D.
	\label{eq:compat-super}
\end{equation}

\subsection{CK implies a time-local differential equation}

The CK (composition) property for the lifted map is
\begin{equation}
	S_{t,s}=S_{t,u}S_{u,s}\qquad\forall s\le u\le t,
	\label{eq:CK}
\end{equation}
with normalization $S_{s,s}=I$.
Under mild regularity assumptions (e.g.\ strong continuity in $(t,s)$ and differentiability in $t$),
a CK family admits a time-local generator
\begin{equation}
	L(t) := \left.\frac{\partial}{\partial \tau}S_{\tau,t}\right|_{\tau=t},
	\label{eq:generator}
\end{equation}
and satisfies the forward evolution equation
\begin{equation}
	\frac{\partial}{\partial t}S_{t,s}  =  L(t)S_{t,s},
	\qquad S_{s,s}=I.
	\label{eq:forward-eq}
\end{equation}
Conversely, if \eqref{eq:forward-eq} holds for some sufficiently regular $L(t)$, then the CK property \eqref{eq:CK} follows by uniqueness of solutions to the linear initial-value problem \eqref{eq:forward-eq}: for fixed $s\le u$, both $S_{t,s}$ and $S_{t,u}S_{u,s}$ solve the same ODE in $t$ with the same initial condition at $t=u$.

\subsection{Practical CK checklist}

Given an explicit candidate superoperator family $\{S_{t,s}\}$ (constructed so as to satisfy \eqref{eq:compat-super}), the following tests provide a computationally direct CK criterion:

\begin{enumerate}
	\item[(A)] \textbf{Diagonal normalization.} Verify
	\begin{equation}
		S_{s,s}=I \qquad \text{for all }s.
		\label{eq:checkA}
	\end{equation}
	
	\item[(B)] \textbf{Compute the candidate generator on the diagonal.} Evaluate
	\begin{equation}
		L(t)=\left.\frac{\partial}{\partial \tau}S_{\tau,t}\right|_{\tau=t}.
		\label{eq:checkB}
	\end{equation}
	
	\item[(C)] \textbf{Verify the forward equation.} Check that
	\begin{equation}
		E(t,s):=\frac{\partial}{\partial t}S_{t,s}-L(t)S_{t,s}=0
		\qquad \text{for all }t\ge s.
		\label{eq:checkC}
	\end{equation}
	When \eqref{eq:checkA}--\eqref{eq:checkC} hold (with $L$ sufficiently regular), the family $\{S_{t,s}\}$
	satisfies the CK property \eqref{eq:CK}.
\end{enumerate}

\end{document}